\documentclass{aptpub}

\usepackage{amsmath}
\usepackage{amssymb}
\usepackage{amscd}
\usepackage{bbm}
\usepackage{calc}
\usepackage{cite}
\usepackage{graphicx}
\usepackage[colorlinks=false]{hyperref}
\usepackage{natbib}
\usepackage{parskip}
\usepackage{placeins}
\usepackage{xcolor}
\usepackage[linesnumbered]{algorithm2e}
\usepackage{geometry}
\textheight = 574pt


\newcommand{\mvbar}{\middle\vert} 
\newcommand{\ud}{\,\mathrm{d}} 

\newcommand{\target}[3]{f^{#1}_{#2}\!\left(#3\right)} 
\newcommand{\mctarget}[3]{\tilde{f}^{#1}_{#2}\!\left(#3\right)} 
\newcommand{\termprop}[3]{h^{\text{#1}}_{#2}\!\left(#3\right)} 
\newcommand{\Trans}[4]{p^{\text{#1}}_{#2}\!\left(#4 \,\mvbar\, #3\right)} 

\newcommand{\lange}[2]{\mathbb{DL}^{#1}_{#2}} 
\newcommand{\ornst}[2]{\mathbb{O}^{#1}_{#2}} 

\newcommand{\A}[3]{A^{\text{#1}}_{#2}\left(#3\right)} 
\newcommand{\Bmu}{\hat{\boldsymbol{\mu}}_c} 
\newcommand{\cores}{^{C}_{c=1}} 
\newcommand{\phifn}[3]{\phi^{\text{#1}}_{#2}\!\left(#3\right)} 
\newcommand{\Phifn}[2]{\Phi^{\text{#1}}_{#2}} 
\newcommand{\tX}{\ensuremath{\mathfrak{X}}\xspace} 
\newcommand{\x}{\boldsymbol x} 
\newcommand{\X}[2]{\boldsymbol X_{#1}^{\ifthenelse{\equal{#2}{}}{}{(#2)}}} 
\newcommand{\y}{\boldsymbol y} 
\newcommand{\yT}{\boldsymbol y_T} 

\newcommand{\algref}[1]{\hyperref[#1]{Algorithm \ref*{#1}}}
\newcommand{\stepref}[1]{\hyperref[#1]{Step \ref*{#1}}}
\newcommand{\algstref}[2]{\hyperref[#2]{Algorithm \ref*{#1} Step \ref*{#2}}}
\newcommand{\figref}[1]{\hyperref[#1]{Figure \ref*{#1}}}
\newcommand{\subfigref}[3]{\hyperref[#1]{Figure \ref*{#2}(#3)}}
\newcommand{\tabref}[1]{\hyperref[#1]{Table \ref*{#1}}}
\newcommand{\apxref}[1]{\hyperref[#1]{Appendix \ref*{#1}}}
\newcommand{\apxrefpl}[2]{Appendices \hyperref[#1]{\ref*{#1}} and \hyperref[#2]{\ref*{#2}}}
\newcommand{\secref}[1]{\hyperref[#1]{Section \ref*{#1}}}
\newcommand{\prinref}[1]{\hyperref[#1]{Principle \ref*{#1}}}
\newcommand{\conref}[1]{\hyperref[#1]{Condition \ref*{#1}}}
\newcommand{\resref}[1]{\hyperref[#1]{Result \ref*{#1}}}
\newcommand{\defnref}[1]{\hyperref[#1]{Definition \ref*{#1}}}
\newcommand{\thmref}[1]{\hyperref[#1]{Theorem \ref*{#1}}}
\newcommand{\lemref}[1]{\hyperref[#1]{Lemma \ref*{#1}}}
\newcommand{\corrolref}[1]{\hyperref[#1]{Corollary \ref*{#1}}}
\newcommand{\remref}[1]{\hyperref[#1]{Remark \ref*{#1}}}


\newcommand{\W}[2]{\boldsymbol W_{#1}^{\ifthenelse{\equal{#2}{}}{}{(#2)}}} 
\newcommand{\mc}{\boldsymbol m_c} 
\newcommand{\Vc}{\mathbf V_c} 
\newcommand{\D}{\mathbf D} 
\newcommand{\bH}{\mathbf H} 
\newcommand{\bM}{\mathbf M} 
\def\hatprecon{\hat{\mathbf{\Lambda}}} 

\authornames{Dai, Pollock, Roberts} 
\shorttitle{Monte Carlo Fusion} 



\begin{document}

\title{Monte Carlo Fusion}

\authorone[University of Essex]{Hongsheng Dai} 
\authortwo[University of Warwick]{Murray Pollock} 
\authorthree[University of Warwick]{Gareth Roberts} 
\addressone{Department of Mathematical Sciences, University of Essex, Wivenhoe Park, Colchester, CO4 3SQ} 
\emailone{hdaia@essex.ac.uk}
\addresstwo{Department of Statistics, University of Warwick, Gibbet Hill Road, Coventry, CV4 7ES} 
\emailtwo{m.pollock@warwick.ac.uk; gareth.o.roberts@warwick.ac.uk}

$\vspace{-3cm}$
\begin{abstract}
This paper proposes a new theory and methodology to tackle the problem of unifying distributed analyses and inferences on shared parameters from multiple sources, into a single coherent inference. This surprisingly challenging problem arises in many settings (for instance, expert elicitation, multi-view learning, distributed `big data' problems etc.), but to-date the framework and methodology proposed in this paper (Monte Carlo Fusion) is the first general approach which avoids any form of approximation error in obtaining the unified inference. In this paper we focus on the key theoretical underpinnings of this new methodology, and simple (direct) Monte Carlo interpretations of the theory. There is considerable scope to tailor the theory introduced in this paper to particular application settings (such as the big data setting), construct efficient parallelised schemes, understand the approximation and computational efficiencies of other such unification paradigms, and explore new theoretical and methodological directions.
\end{abstract}


\keywords{Fork-and-join; Fusion; Langevin diffusion; Monte Carlo} 

\ams{65C05;65C60}{62C10;65C30}


\section{Introduction} \label{sec:intro}

A common problem arising in statistical inference is the need to unify distributed analyses and inferences on shared parameters from multiple sources, into a single coherent inference. This unification (or what we term `fusion') problem can arise either explicitly due to the nature of a particular application, or artificially as a consequence of the approach a practitioner takes to tackling an application.

Typically there will exist no closed form analytical approach to unifying distributed inferences, and so we focus on a Monte Carlo approach. Stated generally, in this paper we are interested in sampling (without error) the following $d$-dimensional (fusion) target density, 
\begin{align}
\label{eq:prod}
\target{}{}{\x} \propto \target{}{1}{\x} \cdots \target{}{C}{\x}, 
\end{align}
where each $\target{}{c}{\x}$ ($c\in\{1,\ldots{},C\}$) is a density (up to a multiplicative constant) representing one of the $C$ distributed inferences we wish to unify. Each $\target{}{c}{\x}$ (which we term a sub-posterior) may in practice itself be represented by a Monte Carlo sample ($\mctarget{}{c}{\x}$), and in this paper we assume we are able to sample (directly and exactly) from each $\target{}{c}{\x}$.

Specific examples of this problem arising naturally in an application include: expert elicitation (\cite{Berger.1980}, \cite{SS:GZ86}), in which the (distributional) views of multiple experts on a topic (or set of parameters) have to be pooled into a single view before a decision-maker can make an informed decision; and, multi-view learning (\cite{multiview.zhao}, \cite{multiview.Li}) and meta-analysis (\cite{Fleiss.1993}, \cite{SIM:SST95}), in which an interpretation could be that we are synthesising multiple inferences on a particular parameter set (computed on datasets which may or may not be of the same type), but the underlying raw data is not directly available for the unified inference. Obtaining the raw data may itself be an insoluble problem due to reasons including the nature of the original publication, data confidentiality, or simply time and storage constraints.

This fusion problem also arises artificially in a number of settings, particularly within modern statistical methodologies tackling `big data'. The computational cost of algorithms such as Metropolis-Hastings, which is an iterative algorithm requiring  full access at every iteration to the full dataset, scale poorly with increasing volumes of data unless a modification is found.

One common modification in light of the challenge of big data is to deploy a `divide-and-conquer' approach (or more accurately termed `fork-and-join' approach (\cite{Stamatakis.2013})). In this setting the full dataset is artificially split into a large number of smaller data sets, inference is then conducted on each smaller data set in isolation, and the resulting inferences are unified (see for instance, \cite{Scott.2016}, \cite{Agarwal.2012}, \cite{Neiswanger.2014}, \cite{Dunson.2013},  \cite{icml:ssld14}, \cite{aistats:sctd16} and \cite{b:lsd17}). The rationale for such an approach is that inference on each small data set can be conducted independently, and in parallel, and so one could exploit large clusters of computing cores to greatly reduce the elapsed time to conduct the full inference. The weakness of these approaches is that the fusion of the separately conducted inferences is inexact. It should be noted that divide-and-conquer methodologies will typically have additional constraints due to hardware concerns -- such as minimising or removing any communication between computing cores to reduce the effect of \textit{latency}. We focus in this paper on the general fusion problem, and so do not fully address the problem in the context of big data (to which we return in subsequent work).

The framework and methodology we outline in this paper (Monte Carlo Fusion) for sampling exactly from (\ref{eq:prod}) can be viewed as a simple rejection sampling scheme on an extended space -- we develop and sample from efficient proposal densities for (\ref{eq:prod}), the samples from which we retain according to an appropriate acceptance probability. The mathematical complication in this paper is in computing the intractable acceptance probability -- which requires the auxiliary simulation of collections of Brownian (or Ornstein-Uhlenbeck) bridges. Our fusion approach provides a principled way to understand the error in existing unification schemes, using a simple linear combination and correction of Monte Carlo samples (analogous to a  traditional meta-analysis approach).

The presentation of this paper broadly follows this pedagogy: In \secref{sec:extended} we present a density on an extended space which admits (\ref{eq:prod}) as a marginal. The remainder of the paper then develops a rejection sampler for the extended density in \secref{sec:extended}. In \secref{sec:brownian_rejection} we develop general theory and methodology for sampling (\ref{eq:prod}) based on a collection of independent Brownian bridges. In \secref{sec:ornstein_rejection} we present a modification of the theory developed in \secref{sec:brownian_rejection} using Ornstein-Uhlenbeck bridges, resulting in sampling efficiencies for the particular (common) setting in which the fusion density is believed to be approximately Gaussian. In \secref{sec:examples} we consider examples of our methodology applied to both light-tailed and heavy-tailed fusion target densities. Finally, in \secref{sec:conclusions} we conclude by discussing the exciting new research directions possible using Monte Carlo Fusion. Much of the technical detail in the paper is suppressed for ease of reading, but can be found in the appendices.


\section{An extended fusion density} \label{sec:extended}
Consider $\target{}{}{\x}$ and $\target{}{c}{\x}$ as described in (\ref{eq:prod}), where $\target{}{}{\x}$ is integrable and we can sample from the density proportional to $\target{}{c}{\x}$.

The following simple observation will form the foundation of our approach: Suppose that $\Trans{}{c}{\x}{\y }$ is the transition density (with respect to Lebesgue measure) of a Markov chain on $\mathbf R^d$ with invariant density proportional to $f_c^2$.

\begin{proposition}
\label{proposition:fusionmeasure}
The $(C+1)d$-dimensional (fusion) density proportional to the integrable function
\begin{align}
g^{}(\x^{(1)}, \ldots, \x^{(C)}, \y )
& = \prod\cores \left[\target{2}{c}{\x^{(c)}} \cdot \Trans{}{c}{\x^{(c)}}{\y}  \cdot  \frac{1}{\target{}{c}{\y}} \right], \label{eq:targ}
\end{align}
admits the marginal density $f$ for ${\boldsymbol y}$.
\end{proposition}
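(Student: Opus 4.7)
The plan is to directly compute the marginal of $\y$ by integrating $g$ against Lebesgue measure over the $\x^{(c)}$ variables and showing that the result is proportional to $f(\y) = \prod_{c=1}^{C} \target{}{c}{\y}$. The proof should be essentially a two-line calculation exploiting two features of the construction: the product structure in $c$ (which makes the $C$-fold integral separable) and the stationarity of each Markov kernel $\Trans{}{c}{\cdot}{\cdot}$ with respect to its invariant density proportional to $\target{2}{c}{\cdot}$.

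First, I would pull the factors $\prod_{c} 1/\target{}{c}{\y}$ outside the integral since they do not depend on any $\x^{(c)}$. Then, because the integrand factors as a product of terms, each depending only on a single distinct $\x^{(c)}$, the $Cd$-dimensional integral decomposes into the product of $C$ one-variable (i.e.\ $d$-dimensional) integrals:
\begin{align*}
\int g(\x^{(1)},\ldots,\x^{(C)},\y)\,\ud \x^{(1)} \cdots \ud \x^{(C)}
&= \prod_{c=1}^{C} \frac{1}{\target{}{c}{\y}} \int \target{2}{c}{\x^{(c)}} \Trans{}{c}{\x^{(c)}}{\y} \,\ud \x^{(c)}.
\end{align*}

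Next, I would invoke the stationarity of $\Trans{}{c}{\cdot}{\cdot}$. Since this is the transition density of a Markov chain whose invariant density is proportional to $\target{2}{c}{\cdot}$, the inner integral equals $\target{2}{c}{\y}$ up to the (possibly unknown) normalising constant of the invariant density, which is absorbed into the overall proportionality constant. Substituting this in gives $\prod_{c=1}^C \target{2}{c}{\y}/\target{}{c}{\y} = \prod_{c=1}^C \target{}{c}{\y} \propto f(\y)$, which is exactly the claimed marginal. A final comment would verify integrability of $g$: since $f$ is assumed integrable and the marginal is proportional to $f$, $g$ integrates to a finite positive constant, so dividing $g$ by its normaliser yields a bona fide density.

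The only point requiring any thought is the invariance step, which is truly routine given the hypothesis on $\Trans{}{c}{\cdot}{\cdot}$; the rest is bookkeeping of products and normalising constants. Consequently I would not expect any serious obstacle in the proof. The remark worth flagging is simply that the construction is built precisely so that each sub-posterior $\target{}{c}{}$ appears squared in the integrand exactly to cancel against one factor of $\target{}{c}{\y}$ in the denominator after the stationarity identity is applied, leaving one clean copy of $\target{}{c}{\y}$ per core.
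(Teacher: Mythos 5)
Your proof is correct and is precisely the elementary argument the paper has in mind (the authors omit the proof, noting only that it is elementary): the integral over $\x^{(1)},\ldots,\x^{(C)}$ separates by the product structure, and invariance of each $p_c$ with respect to the density proportional to $f_c^2$ turns each factor into $f_c^2(\y)/f_c(\y)=f_c(\y)$, giving a marginal proportional to $f(\y)$. Nothing further is needed; your closing observation about the role of the squared sub-posteriors is exactly the point of the construction.
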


The proof of Proposition \ref{proposition:fusionmeasure} is elementary and is omitted.  The statistical interpretation of Proposition \ref{proposition:fusionmeasure} is simply if it were possible to sample from the $(C+1)d$-dimensional (fusion) density $g^{}$ in (\ref{eq:targ}),  then as a by-product we would obtain a draw from our fusion target density $f$ in (\ref{eq:prod}). How to directly sample from (\ref{eq:targ}) is not clear, even if it were possible to simulate from $p_c(\cdot | \x)$. Our strategy instead will be to use rejection sampling. Two rejection sampling methods (which have differing efficiencies) are provided in Sections \ref{sec:brownian_rejection} and \ref{sec:ornstein_rejection}.



\section{A fusion rejection sampler using Brownian bridges} \label{sec:brownian_rejection}
\subsection{The methodology}

Consider the proposal density for the extended fusion target (\ref{eq:targ}) proportional to the function
\begin{align}
h^{bm}(\x^{(1)}, \ldots, \x^{(C)}, \y)& =
\prod\cores \left[\target{}{c}{\x^{(c)}} \right] \cdot 
 \exp\left(-\frac{C \cdot \| \y - \bar {\x} \|^2}{2T} \right),
\label{eq:proposal}
\end{align}
where $\bar {\x} =C^{-1}\sum_{c=1}^C \x^{(c)}$, and $T$ is an arbitrary positive constant.

Simulation from the proposal $h^{bm}$ can be achieved directly. In particular, $\x^{(1)}, \ldots \x^{(C)}$ are first drawn independently from $f_1, \ldots f_C$ respectively, and then $\y$ is simply a Gaussian random variable centred on ${\bar \x}$. 

In Proposition \ref{proposition:fusionmeasure} we presented a general form of the extended fusion target, in which $\Trans{}{c}{\x}{\y }$ is the transition density (with respect to Lebesgue measure) of a Markov chain on $\mathbf R^d$ with invariant density proportional to $f_c^2$. In this paper we set $\Trans{}{c}{\x}{\y }:= \Trans{dl}{T, c}{\x}{\y}$, the transition density of a double Langevin diffusion for $f_c$ (i.e. the transition density of a Langevin diffusion for $f^2_c$) from $\x$ to $\y$ over a pre-defined (user-specified) time $T>0$. To distinguish the resulting extended fusion target from the general case, we further denote the extended fusion target by $g^{dl}$. In particular, for all $1\le c \le C$, we consider the $d$-dimensional (double) Langevin (DL) diffusion processes $\tX = \{\X{t}{c}, t\in [0,T], c=1,\cdots, C\}$, given by
\begin{align}
\ud\X{t}{c} 
& = \nabla \A{dl}{c}{\X{t}{c}}  \ud t + \ud\W{t}{c}, \label{eq:general_diffusion}
\end{align}
where $\W{t}{c}$ is a $d$-dimensional Brownian motion, $\nabla$ is the gradient operator over $\x$ and
\begin{align}
\A{dl}{c}{\x} & := \log \target{}{c}{\x}. \label{eq:Ax_alphax}
\end{align}
$\X{t}{c}$ has invariant distribution $\target{2}{c}{\x}$, \textit{for any} $t\in [0,T]$ \citep{Hansen.2003}.
We also impose the following standard regularity property (where {\bf div} denotes the divergence operator)

\begin{condition}
\label{condition:2.1}
Define
\begin{align}
 \phifn{dl}{c}{\x} &:= \frac{1}{2} (\|\nabla \A{dl}{c}{\x}\|^2 + \text{\bf div }\nabla\A{dl}{c}{\x}). \label{eq:phi_DL}
\end{align}
There exists constant $\Phifn{bm}{c}>-\infty$ such that for all $\x$ and each $c \in \{1,\cdots, C\}$,
$ \phifn{dl}{c}{\x}  \geq  \Phifn{bm}{c}$.
\label{condition:C2BM}
\end{condition}

Then we have the following proposition which gives a rejection sampling method for $g^{dl}(\x^{(1)}, \ldots, \x^{(C)}, \y )$.

\begin{proposition}
\label{proposition:ratiobound}
Under Condition  \ref{condition:2.1}
we can write
\begin{align}
\label{eq:factrej}
{g^{dl} (\x^{(1)}, \ldots, \x^{(C)}, \y ) \over  h^{bm}(\x^{(1)}, \ldots, \x^{(C)}, \y )} = \left[ \frac{\sqrt{C}}{\sqrt{2\pi T}} \right]^C \times \rho^{bm} \times Q^{bm} \times \prod\cores e^{-T\Phifn{bm}{c}},
\end{align}
where
\begin{align}
\label{eq:Piden}
\rho^{bm} := \rho^{bm}(\x^{(1)}, \cdots, \x^{(C)}) = e^{-\frac{C \sigma^2}{2T}}, \;\;\;\;\; \sigma^2 = C^{-1} \sum\cores \|\x^{(c)} - {\bar {\x}}\|^2,
\end{align}
and
\begin{align}
\label{eq:qiden}
Q^{bm}= {\bf E}_{\overline {\mathbb{W}}}\left(E^{bm}\right),
\end{align}
with ${\overline {\mathbb W}}$ denoting the law of $C$ Brownian bridges $\x^{(1)}_t, \cdots, \x^{(C)}_t$ with $\x^{(c)}_0 = \x^{(c)}$ and $\x^{(c)}_T = \y$ in the time interval $[0,T]$ (noting that these Brownian bridges are independent conditional on the starting and ending points) and
\begin{align}
E^{bm}
 &:= \prod_{c=1}^C \left[
\exp\left\{  -   \int_0^T \left( 
\phifn{dl}{c}{\x^{(c)}_t} -  \Phifn{bm}{c}
\right)
 \ud t 
\right\}
\right].
\end{align}
\end{proposition}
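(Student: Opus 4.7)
The plan is to use Cameron--Martin--Girsanov to re-express each double-Langevin transition density $p^{dl}_{T,c}(\y\mid\x^{(c)})$ as a Brownian-motion transition density multiplied by a Brownian-bridge expectation, and then to bookkeep. After cancelling one factor of $f_c(\x^{(c)})$ between $g^{dl}$ and $h^{bm}$ and pulling out the reciprocal Gaussian from $h^{bm}$, the ratio becomes
\begin{align*}
\frac{g^{dl}}{h^{bm}} = \exp\!\left(\frac{C\|\y-\bar\x\|^2}{2T}\right) \prod_{c=1}^C \frac{f_c(\x^{(c)})\, p^{dl}_{T,c}(\y\mid\x^{(c)})}{f_c(\y)},
\end{align*}
so everything reduces to handling each $p^{dl}_{T,c}$ separately.

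The key computation combines Girsanov applied to (\ref{eq:general_diffusion}) with It\^o's formula on $A^{dl}_c(X_t)=\log f_c(X_t)$: the stochastic integral $\int_0^T \nabla A^{dl}_c(X_t)\!\cdot\!\mathrm{d}X_t$ converts to the boundary increment $\log f_c(\y)-\log f_c(\x^{(c)})$ minus $\tfrac12\int_0^T \mathrm{div}\,\nabla A^{dl}_c\,\mathrm{d}t$, which together with Girsanov's quadratic-variation term $-\tfrac12\int_0^T\|\nabla A^{dl}_c\|^2\,\mathrm{d}t$ collects precisely into $-\int_0^T\phi^{dl}_c(X_t)\,\mathrm{d}t$. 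Conditioning on the terminal value $X_T=\y$ then gives
\begin{align*}
p^{dl}_{T,c}(\y\mid\x^{(c)}) = p^{bm}_T(\y\mid\x^{(c)})\cdot\frac{f_c(\y)}{f_c(\x^{(c)})}\cdot \mathbb{E}_{\mathbb{W}^{c,\x^{(c)}\to\y}_T}\!\left[\exp\!\left(-\int_0^T \phi^{dl}_c(X_t)\,\mathrm{d}t\right)\right].
\end{align*}

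Substituting back, the endpoint ratios $f_c(\y)/f_c(\x^{(c)})$ cancel against the algebraic prefactor; conditional independence of the $C$ bridges given the common endpoint $\y$ collapses the product of bridge expectations into a single expectation under $\overline{\mathbb{W}}$; and the elementary identity
\begin{align*}
\sum_{c=1}^C \|\y-\x^{(c)}\|^2 = C\|\y-\bar\x\|^2 + C\sigma^2
\end{align*}
annihilates the explicit $\exp(C\|\y-\bar\x\|^2/(2T))$ term against the Gaussian exponent arising from $\prod_c p^{bm}_T(\y\mid\x^{(c)})$, producing $\rho^{bm}=e^{-C\sigma^2/(2T)}$ together with the claimed deterministic prefactor. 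A final split $\phi^{dl}_c = (\phi^{dl}_c-\Phi^{bm}_c)+\Phi^{bm}_c$ inside each exponent isolates the constant factor $\prod_c e^{-T\Phi^{bm}_c}$ and leaves exactly $Q^{bm}$ as defined in (\ref{eq:qiden}).

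The main obstacle --- and the precise role of Condition \ref{condition:2.1} --- is to ensure that $Q^{bm}$ is a usable rejection-sampler envelope rather than merely a formal expression: the lower bound $\phi^{dl}_c \geq \Phi^{bm}_c$ is exactly what forces each integrand in the bridge expectation to lie in $[0,1]$, so $Q^{bm}$ is finite and the induced acceptance probability is bounded by one. The Girsanov change of measure itself is standard under smoothness and mild (Novikov-type) integrability of $\nabla A^{dl}_c$ implicit in the setup; after that, the remainder of the argument is bookkeeping.
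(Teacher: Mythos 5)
Your proof is correct and follows essentially the same route as the paper: the paper simply cites the Dacunha-Castelle representation of $\Trans{dl}{T,c}{\x^{(c)}}{\y}$ and then rearranges, whereas you re-derive that representation from Girsanov plus It\^o's formula on $\A{dl}{c}{\cdot}$ before performing the identical bookkeeping (endpoint cancellation, the sum-of-squares identity, and the split of $\phifn{dl}{c}{\cdot}$ around $\Phifn{bm}{c}$). The remarks on conditional independence of the bridges and on the role of Condition \ref{condition:2.1} in making $Q^{bm}$ a valid acceptance probability are accurate and consistent with the paper's intent.
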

\begin{proof}
From the Dacunha-Castelle representation \citep{DacFlo86} we have that
\begin{align}
\Trans{dl}{T,c}{\x^{(c)}}{\y} 
& = {f_c(\y ) \over f_c (\x^{(c)} )} \times \frac{\sqrt{C}}{\sqrt{2\pi T}} \exp{\left({-
\| \y - \x_c \|^2 \over 2T}
\right)} \times  \; {\bf E}_{\overline {\mathbb{W}}}\left[  \exp\left\{  -   \int_0^T \phifn{dl}{c}{\x^{(c)}_t}  \ud t \right\} \right].
\end{align}
The result then follows from  (\ref{eq:targ}) and (\ref{eq:proposal}) by rearrangement and recalling that $\sum_{c=1}^C \| \y - \x^{(c)} \|^2
= \sum_{c=1}^C \|  \x^{(c)} - {\bar \x} \|^2 + C||\y - {\bar \x}||^2$.
\end{proof}
Here $\rho^{bm}$ and $Q^{bm}$ are both necessarily bounded by $1$, and when interpreted methodologically (see next section) correspond to separate acceptance steps within our rejection sampling framework. An event of probability $\rho^{bm}$ can be simulated by direct computation, and an event of probability $Q^{bm}$ can be simulated using the extensive efficient methodology on Poisson samplers (using an auxiliary diffusion bridge path-space rejection sampler as developed in (for instance) \cite{Beskos.2006,b:bpr06,Beskos.2008}, \cite{MOR:CH13}, \cite{Dai.2014}, \cite{phd:p13}, \cite{Pollock.2016} and \cite{b:pjr15}). Note that there is a trade-off involved in the (user-specified) choice of $T$. For small $T$, $\rho^{bm}$ will likely be small while $Q^{bm}$ is large, whereas for large $T$ the opposite will be true. A small value of $T$ is usually preferred since the computational cost for the diffusion bridge rejection sampling for $Q^{bm}$ is comparatively expensive.

The algorithm for simulating from $f$ (by means of $g$) therefore proceeds as per Algorithm \ref{algorithm:1} (which we term \textit{Monte Carlo Fusion}).

\begin{algorithm}
\SetAlgoLined
{\small
\caption{Monte Carlo Fusion (Brownian Bridge Approach)} \label{algorithm:1}
Initialize a value $T>0$\;
For $c=1, \cdots, C$, simulate $\x_c$ from the density $\target{}{c}{\x}$ and calculate $\bar {\x}$\;
Simulate $\y$ from the Gaussian distribution, with density $\exp\left(-\frac{C \cdot \| \y - {\bar{\x }} \|^2}{2T} \right) $\;
Generate standard uniform random variable $U_1$\;
\eIf{$\log U_1 \leq - \frac{ C\sigma^2 }{2T} $ }{
Generate standard uniform variable $U_2$ and the independent Brownian bridges $\x^{(c)}_{t}, c=1,\cdots, C$ in $[0,T]$, conditional on the starting point $\x_c$ and ending point $\y$\;
\eIf{\begin{eqnarray}\label{eq:Algorithm1U2_al}
U_2 \leq  E^{bm}
\end{eqnarray}}{
Accept and output $\y$ as a sample from $\target{}{}{\x}$\;
\tcp{The event (\ref{eq:Algorithm1U2_al}) can be dealt with via the path-space rejection sampling methods in \cite{Beskos.2005,Beskos.2006,Beskos.2008,Pollock.2016}}
}{Go back to step 2\;}
}{Go back to step 2\;}
}
\end{algorithm}


\subsection{Practical interpretation of the algorithm}

\begin{remark}
Adjustment for the simple average of the sample from sub-densities.
\end{remark}
In the above algorithm, for all $c$, $\x^{(c)}$ is simulated from $\target{}{c}{\x}$ as in other Monte Carlo Fusion algorithms. The proposed combined value $\y$, however, is actually generated from a Gaussian distribution with mean $\bar {\x}$ and covariance matrix $C^{-1} T\mathbf I_d$. Therefore, $\y$ can be viewed as a simple average of the values $\{\x^{(c)},\  c\in C\}$ added to a Gaussian random error term. This algorithm indicates exactly how the simple average of these independent sub-posterior samples can be adjusted as a draw from the target distribution. This adjustment is in the form of the accept/reject step with acceptance probability $\rho^{bm} \times Q^{bm}$. 
\hfill $\square$

\begin{remark}
Implication on Bayesian group decision theory. \label{remark:group}
\end{remark}
In Step 5 of Algorithm \ref{algorithm:1}, we can also write $\log U_1 \leq - \frac{ C\sigma^2 }{2T}$ as
\begin{eqnarray}\label{eq:variancecondition_BM}
\sigma^2 \leq - \frac{2T\log U_1}{C}.
\end{eqnarray}
Note that $\sigma^2$ is actually the {\slshape sample variance} of the simulated starting points $\x^{(c)}$s (strictly speaking, it is the sum of variances for each component of $\x^{(c)}$). Thus in this step, the acceptance condition (\ref{eq:variancecondition_BM}) implies that $\y$ will have a reasonable probability of being accepted as a sample from $\target{}{}{\x}$ only when the variance of $\x^{(c)}$s is small enough.  This coincides with our intuition in group decision making. For example, the small variance of $\{ \x^{(c)},\ c\in C\}$ (satisfying condition (\ref{eq:variancecondition_BM})) means that the decisions/results from each group are similar, so that we can combine these decisions/results in step 7, using (\ref{eq:Algorithm1U2_al}), in Algorithm \ref{algorithm:1}. On the other hand, if the variance of 
$\{ \x^{(c)},\ c\in C\}$
(not satisfying condition (\ref{eq:variancecondition_BM})) is large, then 
$\{ \x^{(c)},\ c\in C\}$ provide contradictory evidence and
should usually be rejected.
 Note that algorithmically, this initial rejection sampling step is efficient since
 early rejection then avoids the need to carry out the more complicated
 (and computationally expensive)
  step 7, via the condition of  (\ref{eq:Algorithm1U2_al}).  \hfill $\square$

\begin{remark}
An extreme case. \label{remark:distance}
\end{remark}
A very interesting extreme case is to choose $T=0$. Then the event (\ref{eq:Algorithm1U2_al}) will certainly happen, but the event $U_1 \leq \exp \left(- \frac{ C\sigma^2 }{2T} \right )$ will occur only if  $\x^{(1)} = \cdots = \x^{(C)} = \y$. Therefore in such an extreme case, Algorithm \ref{algorithm:1} (theoretically) draws a sample $\y$ from
\begin{eqnarray*}
\termprop{bm}{}{\x^{(1)}, \cdots, \x^{(C)},\y} &\propto& \prod\cores \target{}{c}{\y}
\end{eqnarray*}
which is the target distribution. In practice, however, we have to choose $T>0$, since the independent sub-posterior samples $\x^{(c)}$s have zero probability to be the same. 
\hfill $\square$



\section{A fusion rejection sampler using Ornstein-Uhlenbeck bridges} \label{sec:ornstein_rejection} 
\subsection{The methodology}

In the previous section, the proposal density $h^{bm}$  uses a simple average of the sub-posterior samples $\x^{(c)}$ as the mean of the proposal $\y$. Another approach is to consider using a weighted average of the sub-posterior samples $\x^{(c)}$ as the mean of the proposal $\y$. For example, in a typical meta-analysis, a weighted average from different research outputs is typically used as the unification mean, and an individual output with more certainty (or, smaller variance) should consequently have larger weights \citep{Fleiss.1993}.

Denoting $\Bmu$ and $\hatprecon_c$ as the mean and the inverse of covariance matrix estimates for distribution $\target{}{c}{\x}$, respectively. We consider the following more general proposal density
\begin{eqnarray}\label{eq:defhosimple_1}
&&\termprop{ou}{}{\x^{(1)}, \cdots, \x^{(C)}, \y} \propto  \left[\prod\cores \target{}{c}{\x^{(c)}} \right]
\text{\bf etr} \left[ - \frac{1}{2}  \left[ \y - \widetilde{\x} \right]^{\otimes 2} \D \right]
\end{eqnarray}
where
\begin{eqnarray}\label{eq:equationD}
\D &=& \sum\cores \D_c , \;\; \D_c = \Vc^{-1} -  \hatprecon_c \nonumber \\
\Vc  &:=& \Vc(T) = \text{\bf Var } \left(\int_0^T e^{\hatprecon_c(t-T) } d \W{t}{c} \right) =
\frac{\hatprecon_c^{-1}}{2} \left( \mathbf I_d - e^{-2\hatprecon_c T } \right)  
\end{eqnarray}
and
\begin{eqnarray}\label{eq:tildex}
\widetilde {\x}&=&  \D^{-1} \left\{\sum\cores \left( \Vc^{-1} \mc - \hatprecon_c \Bmu\right) \right\} \nonumber\\
\mc &:=& \mc (\x^{(c)},T) = \Bmu + e^{- \hatprecon_c T} (\x^{(c)} - \Bmu).
\end{eqnarray}
It is also straightforward to simulate from $\termprop{ou}{}{\cdot}$, since $\x^{(c)}$s are independently drawn from $\target{}{c}{\x}$ and then $\y$ is generated from a Gaussian distribution with mean $\tilde \x$ and covariance matrix $\D^{-1}$.

Here the vector $\mc $ and the matrix $\Vc$  are actually the mean and covariance matrix, respectively, for the following OU process at time $T$ conditional on the starting point $\x_{0}^{(c)} = \x^{(c)}$ \citep{Masuda.2004},
\begin{eqnarray}\label{eq:Ax_alphax_OU}
&\ud\X{t}{c} = \nabla \A{ou}{c}{\X{t}{c}}  \ud t + \ud\W{t}{c}, \;\; \X{0}{c} \sim \target{2}{c}{\x}& 
\end{eqnarray}
with 
\begin{eqnarray}\label{eq:Ax_OU}
&\A{ou}{c}{\x} = - \displaystyle \frac{ (\Bmu - \x)^{tr} \hatprecon_c (\Bmu - \x)}{2}.   &
\end{eqnarray}

We shall also require the following regularity property.
\medskip
\begin{condition}
\label{condition:3.1}
Define 
\begin{eqnarray}\label{eq:phi_OU_DL}
 \phifn{ou}{c}{\x} 
&= & \frac{1}{2} \left( \|\hatprecon_c (\Bmu - \x)\|^2 - \text{ \bf trace }(\hatprecon_c) \right) .
\end{eqnarray}

For any  $c \in \{1,\cdots, C\}$, there exists $\Phifn{ou}{c}>-\infty$ such that, for all $\x$,
\begin{eqnarray}\label{eq:phi}
 \phifn{dl}{c}{\x} - \phifn{ou}{c}{\x} \geq \Phifn{ou}{c} .
\end{eqnarray}
\label{condition:C2}
\end{condition}

We now have the following result.
\medskip

\begin{proposition}
\label{proposition:OUmainresult}
Define function
\begin{eqnarray}\label{eq:rho}
\rho^{ou} &:=& \rho^{ou}(\x^{(1)}, \cdots, \x^{(C)})  \\
&=& \text{\bf etr}\left\{- \frac{1}{2}  \left[ \bH \D^{-1} +  \sum\cores \bM_{1,c} \left( \mc +  \bM_{1,c} ^{-1} \bM_{2,c}  \Vc \hatprecon_c \Bmu \right)^{\otimes 2}
\right]  \right\}  \nonumber
\end{eqnarray}
with
\begin{eqnarray*}
\bM_{1,c} &=&  e^{2\hatprecon_c T} \hatprecon_c- \Vc^{-1}  \left( \sum\cores \hatprecon_c \right)\D^{-1}  \nonumber \\
\bM_{2,c} &=& \Vc^{-1}\left( \sum\cores \hatprecon_c \right)  \D^{-1} -2\hatprecon_c e^{2\hatprecon_c T} \nonumber \\
\bH &=& \left( \sum\cores  \left(\mc -\Vc \hatprecon_c\Bmu \right)^{\otimes 2}  \Vc^{-1}  \right) \left( \sum\cores  \Vc^{-1} \right) - \left\{ \sum\cores \Vc^{-1} \left(  \mc  - \Vc \hatprecon_c \Bmu \right)  \ \right\}^{\otimes 2} .
\end{eqnarray*}

Under Condition \ref{condition:3.1}
we can write 
\begin{equation}
\label{eq:factrej_ou}
{g^{dl} (\x^{(1)}, \ldots, \x^{(C)}, \y ) \over \termprop{ou}{}{\x^{(1)}, \ldots, \x^{(C)}, \y }} \propto \rho^{ou}(\x^{(1)}, \cdots, \x^{(C)}) \times Q^{ou} \times \prod\cores e^{-T\Phifn{ou}{c}}
\end{equation}
where
\begin{equation}
\label{eq:qiden}
Q^{ou}= {\bf E}_{\overline {\ornst{}{}}}\left(
E^{ou}
\right)
\end{equation}
with $\overline {\ornst{}{}}$ denoting the law of $C$ OU bridges $\x^{(c)}_t, c=1, \cdots, C$ in time interval $[0,T]$. These $\x^{(c)}_t$s are independent conditional on the starting point $\x^{(c)}$ and common ending point $\y$ and
\begin{equation}\label{eq:Eou}
E^{ou}:=  \prod_{c=1}^C \left[ \exp\left\{  -   \int_0^T \left(  \phifn{dl}{c}{\x^{(c)}_t} - \phifn{ou}{c}{\x^{(c)}_t}  -  \Phifn{ou}{c}
\right)
 \ud t 
\right\}
\right]\ .
\end{equation}
\end{proposition}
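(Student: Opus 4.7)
The plan is to mirror the proof of \thmref{proposition:ratiobound}, but take the reference diffusion to be the OU process underlying $h^{ou}$ rather than a Brownian motion, since the OU transition density appears (essentially) in the denominator of the target-to-proposal ratio. The natural tool is a Girsanov change of measure between the DL law $\lange{c}{T}$ and the OU law $\ornst{c}{T}$, both started at $\x^{(c)}$. Writing the Radon-Nikod\'ym derivative in the standard form and applying It\^o's lemma (the Dacunha-Castelle manoeuvre) collapses the stochastic integral into the boundary term $(\A{dl}{c}{X_T} - \A{ou}{c}{X_T}) - (\A{dl}{c}{X_0} - \A{ou}{c}{X_0})$ together with a pathwise integral of $\phifn{dl}{c}{\cdot} - \phifn{ou}{c}{\cdot}$. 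Conditioning on $X_T = \y$ then yields the transition-density identity
\begin{align*}
\Trans{dl}{T,c}{\x^{(c)}}{\y} &= \Trans{ou}{T,c}{\x^{(c)}}{\y} \cdot \exp\!\bigl[\A{dl}{c}{\y} - \A{ou}{c}{\y} - \A{dl}{c}{\x^{(c)}} + \A{ou}{c}{\x^{(c)}}\bigr] \\
&\quad \cdot {\bf E}\!\left[\exp\!\left(-\int_0^T \left(\phifn{dl}{c}{\x^{(c)}_t} - \phifn{ou}{c}{\x^{(c)}_t}\right)\ud t\right)\right],
\end{align*}
with the expectation taken under the OU bridge law from $\x^{(c)}$ to $\y$ over $[0,T]$.

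Substituting this identity into $g^{dl}$ from (\ref{eq:targ}) and dividing by $h^{ou}$ from (\ref{eq:defhosimple_1}), the $\A{dl}{c}{\y} - \A{dl}{c}{\x^{(c)}} = \log(\target{}{c}{\y}/\target{}{c}{\x^{(c)}})$ piece combines with the $\target{2}{c}{\x^{(c)}}/\target{}{c}{\y}$ factor to leave $\target{}{c}{\x^{(c)}}$, which cancels against the corresponding factor of $h^{ou}$. Inserting $\pm \Phifn{ou}{c}$ inside each bridge-integral exponent and exploiting the conditional independence of the $C$ OU bridges given their common endpoint $\y$ packages the bridge expectations into $Q^{ou} \cdot \prod_{c=1}^C e^{-T\Phifn{ou}{c}}$. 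What remains to identify as $\rho^{ou}$ is then a purely Gaussian quotient: the product $\prod_{c=1}^C \Trans{ou}{T,c}{\x^{(c)}}{\y}$ times the boundary quadratic $\exp\bigl\{\tfrac{1}{2}\sum_{c=1}^C\bigl[(\Bmu - \y)^{tr}\hatprecon_c(\Bmu - \y) - (\Bmu - \x^{(c)})^{tr}\hatprecon_c(\Bmu - \x^{(c)})\bigr]\bigr\}$ (coming from the residual $-\A{ou}{c}{\y} + \A{ou}{c}{\x^{(c)}}$ contribution), divided by the Gaussian factor $\text{\bf etr}\bigl[-\tfrac{1}{2}(\y - \widetilde{\x})^{\otimes 2}\D\bigr]$ from $h^{ou}$.

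The main obstacle is showing that this Gaussian quotient equals $\rho^{ou}(\x^{(1)},\ldots,\x^{(C)})$ up to a constant independent of $(\x^{(1)},\ldots,\x^{(C)},\y)$. I would proceed by completing the square in $\y$: the three quadratic-in-$\y$ contributions aggregate to $-\tfrac{1}{2}\y^{tr}\D\y$ via $\sum_c(\Vc^{-1} - \hatprecon_c) = \D$, the linear-in-$\y$ contributions aggregate to $\y^{tr}\D\widetilde{\x}$ by the very definition of $\widetilde{\x}$, and hence the $\y$-dependence collapses into exactly $-\tfrac{1}{2}(\y - \widetilde{\x})^{tr}\D(\y - \widetilde{\x})$, which cancels the Gaussian denominator from $h^{ou}$. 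The leftover residual is a quadratic form in the $\mc$'s and $\Bmu$ alone (the substitution $\x^{(c)} - \Bmu = e^{\hatprecon_c T}(\mc - \Bmu)$ eliminates $\x^{(c)}$ in favour of $\mc$). Reorganising this residual into the claimed form requires reparametrising each $\mc \mapsto \mc + \bM_{1,c}^{-1}\bM_{2,c}\Vc\hatprecon_c\Bmu$ inside its squared term and collecting the cross-$c$ contributions arising from $\widetilde{\x}^{tr}\D\widetilde{\x}$ into the matrix $\bH\D^{-1}$; this fiddly linear-algebraic repackaging, in particular verifying the precise form of the interaction matrix $\bH$, is the part of the argument that demands the most care.
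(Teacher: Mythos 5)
Your proposal is correct and reaches the same destination as the paper, but it routes the change of measure differently. You write the Dacunha-Castelle/Girsanov identity directly between the double-Langevin law and the OU law, so that the DL transition density is expressed as the OU transition density times boundary terms $\exp[\A{dl}{c}{\y}-\A{ou}{c}{\y}-\A{dl}{c}{\x^{(c)}}+\A{ou}{c}{\x^{(c)}}]$ times an OU-bridge expectation of $\exp\{-\int_0^T(\phifn{dl}{c}{\cdot}-\phifn{ou}{c}{\cdot})\ud t\}$; this identity is valid because the OU bridge law has Radon-Nikod\'ym derivative proportional to $\exp\{-\int_0^T\phifn{ou}{c}{\cdot}\ud t\}$ with respect to the Brownian bridge law, so the ratio of two Brownian-bridge expectations is exactly the OU-bridge expectation you claim. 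The paper instead obtains the same DL-to-OU change of measure indirectly, as a quotient of two applications of Proposition \ref{proposition:ratiobound}: one to the true target $g^{dl}$ and one to a Gaussian surrogate target $g^{ou}$ built from $\exp(\A{ou}{c}{\cdot})$, both referenced to Brownian bridges; the $\rho^{bm}$ factors cancel and the quotient $E^{bm}/E^{ou}_*$ yields your $E^{ou}$. Both arguments then collapse to the identical intermediate object $\tilde g^{ou}$ of (\ref{eq:proposalou}) and the identical remaining task: showing that the residual Gaussian factor equals $\rho^{ou}$ up to a constant. Your completion-of-square plan (quadratic coefficient $\sum\cores(\Vc^{-1}-\hatprecon_c)=\D$, linear term $\y^{tr}\D\widetilde{\x}$ by definition of $\widetilde{\x}$, residual reorganised into $\bH\D^{-1}+\sum\cores\bM_{1,c}(\cdots)^{\otimes 2}$) is precisely the content of the paper's Lemma \ref{lemma:OUresult}, whose proof the paper itself relegates to a supplementary file; so leaving that linear algebra as a sketched-but-unexecuted step puts you at the same level of completeness as the published argument. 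Your route is arguably more self-contained (no surrogate target needed), while the paper's buys reuse of Proposition \ref{proposition:ratiobound} and avoids restating the representation theorem for a second reference diffusion.
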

\begin{proof}
See Appendix.
\end{proof}

Since $\rho^{ou}$ and $Q^{ou}$ in (\ref{eq:factrej_ou}) are always no more than $1$, we have the following rejection sampling algorithm, Algorithm \ref{algorithm:1_OU}.

\begin{algorithm}
\SetAlgoLined
{\small
\caption{Monte Carlo Fusion (Ornstein-Uhlenbeck Approach)} \label{algorithm:1_OU}
Initialise a value $T>0$ and $\Bmu ,\hatprecon_c $\;
For $c=1, \cdots, C$, simulate $\x^{(c)}$ from the density $\target{}{c}{\x}$ and calculate $\widetilde {\x}, \D$\;
Simulate $\y$ from the Gaussian distribution, with mean $\widetilde{\x}$ and covariance matrix $\D^{-1}$ \;
Generate standard uniform random variable $U_1$\;
\eIf{$U_1 \leq \rho^{ou}(\x^{(1)}, \cdots, \x^{(C)})$ (given in the formula (\ref{eq:rho}))}
{
Generate standard uniform random variable $U_2$ and independent OU bridges $\x^{(c)}_t, c=1,\cdots, C$ in $[0,T]$, conditional on the starting point $\x_0^{(c)} = \x^{(c)}$ and ending point $\y$\;
\eIf{\begin{eqnarray}\label{eq:Algorithm1U2_OU}
U_2 \leq \exp \left[ -  \sum\cores \int_0^T \left( \phifn{dl}{c}{\x_{t}^{(c)}} - \phifn{ou}{c}{\x_{t}^{(c)}}  - \Phifn{ou}{c} \right) \ud t \right]
\end{eqnarray}
\tcp{The event (\ref{eq:Algorithm1U2_OU}) can be dealt with via the path-space rejection sampling methods in \cite{Beskos.2005,Beskos.2006,Beskos.2008,Pollock.2016}}
}{
Accept and output $\y$ as a sample from $\target{}{}{\x}$\;
}{go back to step 2}
}
{go back to step 2}
}
\end{algorithm}

In Algorithm \ref{algorithm:1_OU}, $T$ is a tuning parameter as well. If we choose a small value of $T$, the acceptance probability $\rho^{ou} (\x^{(1)}, \cdots, \x^{(C)})$ will be very low. This is noticed by the fact that {\bf trace }$( \bH \D^{-1}) = \infty$ and $\bM_{1,c}$ and $\bM_{2,c}$ are finite matrices, if $T=0$. Therefore in practice, we should choose a reasonably large value $T$. However, if we choose a large $T$, the acceptance probability (\ref{eq:Algorithm1U2_OU}) will be very small. In practice it is usually preferable to choose a small $T$ since the computational cost for the acceptance / rejection step of (\ref{eq:Algorithm1U2_OU}) is much higher.


\subsection{Connection with Consensus Monte Carlo}

In practice, people may employ an approximate version of Algorithm \ref{algorithm:1_OU}, since we can ignore the condition (\ref{eq:Algorithm1U2_OU}) in the rejection step 7, if we choose a small value $T$. In other words, from (\ref{eq:factrej_ou}) of Proposition \ref{proposition:OUmainresult}, we have that for small $T$ the target $g^{dl}(\cdot)$ is approximately equal to a function proportional to
\begin{eqnarray}\label{eq:appro_ou}
\tilde h^{ou} = h^{ou} \cdot \rho^{ou} =  \left[\prod\cores \target{}{c}{\x^{(c)}} \right]
\text{\bf etr} \left[ - \frac{1}{2}  \left[ \y - \widetilde{\x} \right]^{\otimes 2} \D \right] \cdot \rho^{ou}(\x^{(1)}, \cdots, \x^{(C)}) 
\end{eqnarray}
Such an approximation will be very good since $Q^{ou}$, the acceptance probability  (\ref{eq:Algorithm1U2_OU}), will be very close to $1$ with a small $T$. In other words, we only simulate $\y$ from 
$\tilde h^{ou}(\cdot)$ and accept $\y$ as a sample approximately from the target distribution $\target{}{}{\cdot}$.

If we draw $\x_c, c=1, \cdots, C$ independently from each $\target{}{c}{\x}$, respectively. Another naive approach to combine these draws is to use the following linear combination 
\begin{eqnarray}\label{eq:ConsensusAverage}
\y = \left(\sum\cores \hatprecon_c \right)^{-1} \left[ \sum\cores \hatprecon_c \x_c \right].
\end{eqnarray}
In practice, $\hatprecon_c^{-1}$ can be obtained via preliminary analysis. Such a $\y$ can be viewed as a sample approximately from $\target{}{}{\x}$ as well. This is named as Consensus Monte Carlo (CMC) in \citep{Scott.2016}. 

The following lemma tells us how the simulation of $\y$ from $\tilde h^{ou}(\cdot)$ is related to the consensus Monte Carlo sample (\ref{eq:ConsensusAverage}). 

\medskip
\begin{lemma}
If $\target{}{c}{\x}$ is a Gaussian distribution and if we choose $T=\infty$, then simulating $\y$ from $\tilde h^{ou}(\cdot)$ will be the same as the Consensus Monte Carlo method. Both draw samples exactly from the target distribution.
\label{lemma:relationwithScott}
\end{lemma}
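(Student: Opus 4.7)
The plan is to split the argument into two pieces: (i) show that in the Gaussian case with $T=\infty$, the distribution of $\y$ drawn from $\tilde h^{ou}$ is the target $\target{}{}{\y}$; and (ii) show that the Consensus Monte Carlo combination (\ref{eq:ConsensusAverage}) has the same distribution. Both pieces reduce to elementary Gaussian calculations, with the only care needed in taking matrix limits as $T\to\infty$.

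First I would observe that when each $\target{}{c}{\x}$ is Gaussian with mean $\Bmu$ and precision $\hatprecon_c$, then $\A{dl}{c}{\x} = \log\target{}{c}{\x}$ agrees with $\A{ou}{c}{\x}$ up to an additive constant, so their gradients coincide and therefore $\phifn{dl}{c}{\x} \equiv \phifn{ou}{c}{\x}$ pointwise. Choosing $\Phifn{ou}{c}=0$ is then admissible under Condition \ref{condition:3.1} and forces $E^{ou}\equiv 1$, hence $Q^{ou}\equiv 1$. By Proposition \ref{proposition:OUmainresult}, up to normalisation,
\begin{equation*}
\tilde h^{ou}(\x^{(1)},\dots,\x^{(C)},\y) \;\propto\; g^{dl}(\x^{(1)},\dots,\x^{(C)},\y).
\end{equation*}
By Proposition \ref{proposition:fusionmeasure}, the $\y$-marginal of $g^{dl}$ is $\target{}{}{\y}$, so $\tilde h^{ou}$ samples $\y$ exactly from the target for any $T>0$, and hence in the limit $T=\infty$.

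Next I would specialise to $T=\infty$ to identify the exact form of the proposal. Since $e^{-\hatprecon_c T}\to \mathbf 0$, we have $\Vc\to\tfrac12\hatprecon_c^{-1}$, $\Vc^{-1}\to 2\hatprecon_c$, $\D_c=\Vc^{-1}-\hatprecon_c\to\hatprecon_c$, so $\D\to\sum\cores\hatprecon_c$; similarly $\mc\to\Bmu$, which after substitution into $\widetilde\x$ yields $\widetilde\x\to\bigl(\sum\cores\hatprecon_c\bigr)^{-1}\sum\cores\hatprecon_c\Bmu$. Crucially, both $\widetilde\x$ and $\D$ become independent of $\x^{(1)},\dots,\x^{(C)}$, so the Gaussian proposal for $\y$ in Step 3 of Algorithm \ref{algorithm:1_OU} degenerates to a Gaussian with mean $\bigl(\sum\cores\hatprecon_c\bigr)^{-1}\sum\cores\hatprecon_c\Bmu$ and covariance $\bigl(\sum\cores\hatprecon_c\bigr)^{-1}$, which is exactly the target. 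Any residual $\x^{(c)}$-dependence in $\rho^{ou}$ therefore integrates out harmlessly without affecting the $\y$-marginal.

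Finally, the Consensus Monte Carlo side is a one-line Gaussian calculation: with independent $\x^{(c)}\sim\target{}{c}{\cdot}$, the linear combination $\y = \bigl(\sum\cores\hatprecon_c\bigr)^{-1}\sum\cores\hatprecon_c\x^{(c)}$ is Gaussian with mean $\bigl(\sum\cores\hatprecon_c\bigr)^{-1}\sum\cores\hatprecon_c\Bmu$ and covariance $\bigl(\sum\cores\hatprecon_c\bigr)^{-1}$, matching both the target and the limiting distribution computed in the previous paragraph. The main technical obstacle I anticipate is simply being careful with the matrix algebra in the $T\to\infty$ limit---in particular, verifying that the intricate $\bH$, $\bM_{1,c}$, $\bM_{2,c}$ terms in $\rho^{ou}$ either remain finite or combine so as not to reintroduce $\x^{(c)}$-dependence into the $\y$-marginal. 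Everything else follows from Propositions \ref{proposition:fusionmeasure} and \ref{proposition:OUmainresult} together with elementary Gaussian algebra.
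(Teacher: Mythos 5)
Your proposal is correct, and the heart of it --- the $T\to\infty$ limits $\mc\to\Bmu$, $\Vc\to\tfrac12\hatprecon_c^{-1}$, $\D\to\sum\cores\hatprecon_c$, $\widetilde\x\to\bigl(\sum\cores\hatprecon_c\bigr)^{-1}\sum\cores\hatprecon_c\Bmu$ --- is exactly the computation in Appendix B. You diverge from the paper in two worthwhile ways. First, the paper does not give your opening argument that in the Gaussian case $\phifn{dl}{c}{\x}\equiv\phifn{ou}{c}{\x}$, hence $Q^{ou}\equiv 1$ and $\tilde h^{ou}\propto g^{dl}$ for \emph{every} $T$; this is a cleaner and strictly stronger route to the ``exactly from the target'' claim (exactness holds for all $T$, not just in the limit), and it leans on Propositions \ref{proposition:fusionmeasure} and \ref{proposition:OUmainresult} rather than on direct Gaussian algebra. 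Second, for the identification with Consensus Monte Carlo you match means and covariances of two Gaussians, whereas the paper establishes a pathwise distributional identity, writing $\y=\widetilde\x+\bigl(\sum\cores\hatprecon_c\bigr)^{-1/2}\boldsymbol\epsilon$ and using $\bigl(\sum\cores\hatprecon_c\bigr)^{1/2}\boldsymbol\epsilon\stackrel{d}{=}\sum\cores\hatprecon_c^{1/2}\boldsymbol\epsilon_c$ to exhibit $\y$ literally as the precision-weighted average of independent draws $\Bmu+\hatprecon_c^{-1/2}\boldsymbol\epsilon_c\sim\target{}{c}{\cdot}$; your moment-matching argument is equally valid for Gaussians but slightly less structural. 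On the one point you defer --- the behaviour of $\rho^{ou}$ --- the paper does the algebra explicitly: since $\bM_{2,c}=-2\bM_{1,c}$ and $\Vc\hatprecon_c\to\tfrac12\mathbf I$, the term $\mc+\bM_{1,c}^{-1}\bM_{2,c}\Vc\hatprecon_c\Bmu\to\Bmu-\Bmu=\mathbf 0$ fast enough to kill the diverging factor $\bM_{1,c}$, so $\rho^{ou}$ tends to a constant free of the $\x^{(c)}$; but your observation that $\rho^{ou}$ does not depend on $\y$, so that any residual $\x^{(c)}$-dependence cannot contaminate the $\y$-marginal once $\widetilde\x$ and $\D$ are constant, already suffices for the lemma.
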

\begin{proof}
See Appendix B.
\end{proof}

This lemma tells us why Consensus Monte Carlo does not provide good results. It is because CMC simulates $\y$ from $\tilde h^{ou}(\cdot)$ with $T=\infty$, however, $T$ should be chosen as a small value to achieve better approximation or even use the exact Algorithm \ref{algorithm:1_OU} with the diffusion path rejection sampler. 
  


\section{Simulation studies} \label{sec:examples}
\subsection{Distribution with light tails}
We consider the target distribution $f(x) \propto e^{-x^4/2}$. We choose $C=4$ and $\target{}{c}{x} = e^{-x^4/2C}, c=1,\cdots, C$. It is easy to check that $\phi^{dl}(x) = \frac{1}{2}\left(\frac{4x^6}{C^2} - \frac{6x^2}{C}\right)$ satisfies Condition \ref{condition:C2BM}. On the other hand, for any chosen values  $\Bmu$, $\hatprecon_c$,  $\phifn{ou}{c}{\x}$ in (\ref{eq:phi_OU_DL}) satisfies  Condition \ref{condition:C2} as well. Therefore both Algorithm \ref{algorithm:1} and Algorithm \ref{algorithm:1_OU} can be applied to simulate from $f(x)$. 

We compare the following Monte Carlo methods for the estimation of the density function of  $f(x)$:
\begin{itemize}
\item[1.] Simulating MC samples directly from $f(x)$ via a simple rejection sampling with standard Gaussian distribution as the proposal;
\item[2.] simulating MC samples based on the exact simulation method, Algorithm \ref{algorithm:1} with $T=1$;
\item[3.] simulating MC samples based on the exact simulation method, Algorithm \ref{algorithm:1_OU} with $T=1$;
\item[4.] simulating MC samples based on the Consensus method in \cite{Scott.2016};
\end{itemize}

The density curve estimation results are summarised in Figure
\ref{figure:exp4}. Note that all results are based on $10,000$ realisations.
The black solid cure (Simulation [1.], the true fitted density curve), the blue solid curve (Simulation [2.]) and the pink solid curve (Simulation [3.]) are all exact algorithms and they are almost
identical. The Consensus methods (Simulation [4.] -- the red dashed curve, has very large
biases. Note that both CMC algorithm and Algorithm \ref{algorithm:1_OU} use the same values of $\Bmu$ and $\hatprecon_c$ based on preliminary analysis.

\begin{figure}[h]
\begin{center}
\includegraphics[scale=0.6]{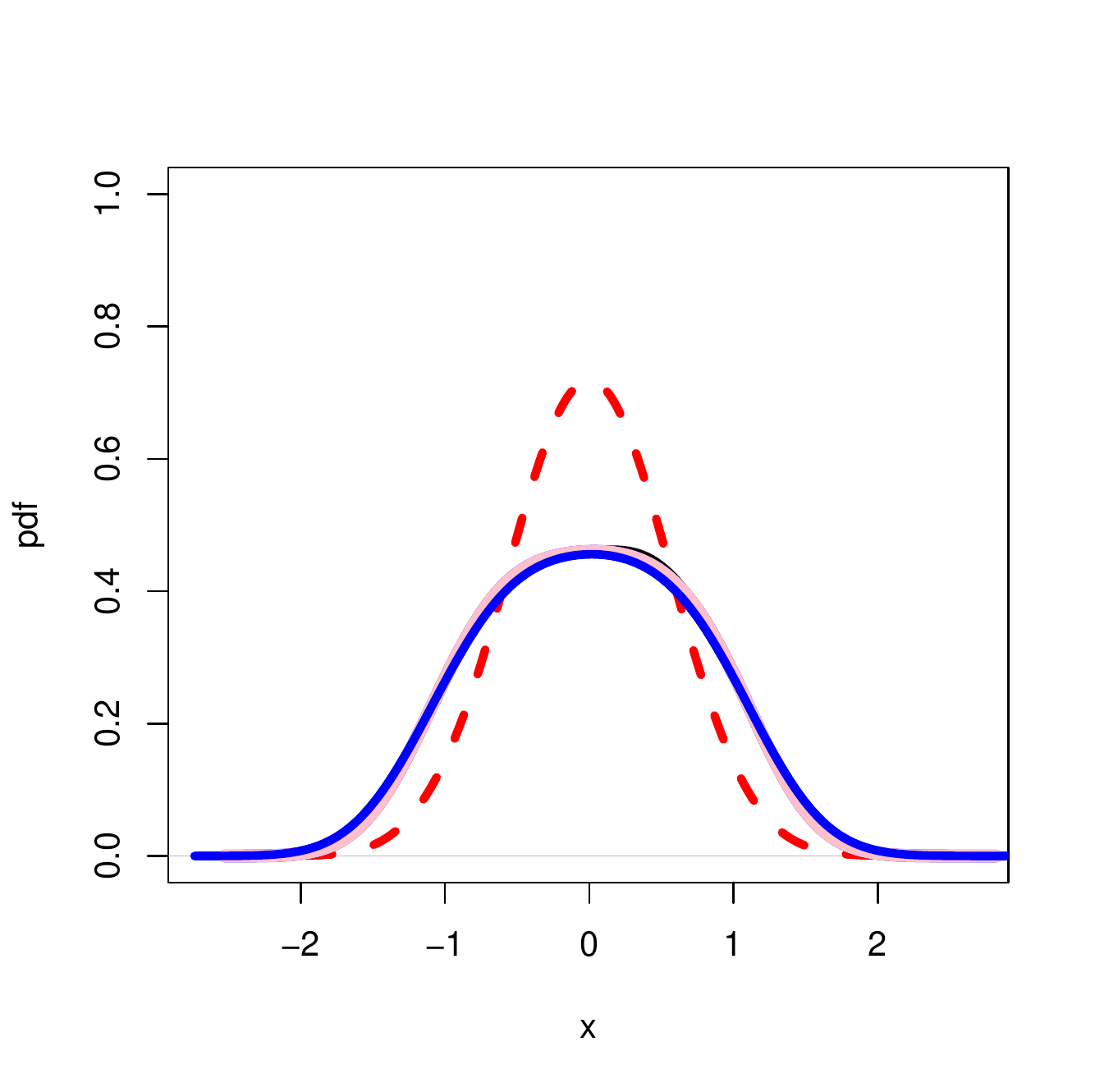}
$\vspace{-0.5cm}$
\caption{Kernel density fitting with bandwidth $0.25$ for density proportional to $e^{-x^4/2}$, based on different Monte Carlo methods.
[1.]-- black solid curve, standard exact MC;
[2.]-- pink solid curve, Algorithm \ref{algorithm:1};
[3.]-- blue solid curve, Algorithm \ref{algorithm:1_OU};
[4.]-- red dashed curve, CMC algorithm.
\label{figure:exp4}}
\end{center}
\end{figure}

The running time of the algorithms are presented in Table \ref{table:runtime}. It seems that Algorithm \ref{algorithm:1_OU} uses the most system running time. This is because Algorithm \ref{algorithm:1_OU} has a smaller acceptance probability (about $0.011$ for the path-space rejection sampling (\ref{eq:Algorithm1U2_OU})) than that in Algorithm \ref{algorithm:1}
(about $0.139$ for the path-space rejection sampling (\ref{eq:Algorithm1U2_al})).

\begin{table}[h]
\begin{center}
\begin{tabular}{cccc}
\hline
Algorithms      & CMC & Algorithm \ref{algorithm:1} & Algorithm \ref{algorithm:1_OU} \\
Running time &   0.05     &         0.36                              &  4.05\\
\hline
\end{tabular}
\caption{System running times in seconds for simulating 10,000 realisations.}\label{table:runtime}
\end{center}
\end{table}

Note that Condition \ref{condition:C2BM} is usually satisfied in most applications, however, Condition \ref{condition:C2} only holds when the target $f(x)$ has lighter tails than Gaussian distributions. We present an example in the following section, where only Condition \ref{condition:C2BM}  is true.


\subsection{Beta distribution}
Consider the target distribution as the Beta distribution with density $\pi(u) \propto u^4 (1-u)$, $u \in [0,1]$, i.e.\ Beta$(5,2)$. To use the proposed algorithms, the support of the target distribution should be in the whole real axis. Therefore we need to use the variable transformation $x = \log(u/(1-u))$ and consider the target distribution as
\begin{eqnarray*}
\target{}{}{x} & \propto & \left[ \frac{\exp(x)}{1+\exp(x)}\right]^5 \left[ \frac{1}{1+\exp(x)}\right]^2 .
\end{eqnarray*}
We decompose $\pi(x)$ into $C=5$ components,
\begin{eqnarray}\label{eq:gkbeta}
\target{}{}{x} &\propto&   \target{}{1}{x} \cdots \target{}{C}{x} \nonumber \\
\target{}{c}{x} &= &\left[ \frac{\exp(x)}{1+\exp(x)}\right] \left[ \frac{1}{1+\exp(x)}\right]^{0.4}.
\end{eqnarray}

Note that for this simple example Condition \ref{condition:C2BM} is satisfied but Condition \ref{condition:C2} is not satisfied. Therefore we compare the following Monte Carlo methods for the estimation of the density function of  Beta$(5,2)$:
\begin{itemize}
\item[1.] Simulating MC samples directly from Beta$(5,2)$ via the simple R command, {\slshape rbeta};
\item[2.] simulating MC samples based on the exact simulation method, Algorithm \ref{algorithm:1} with $T=3$;
\item[3.] simulating MC samples based on the Consensus method in \cite{Scott.2016}, with variable transformation and with the decomposition in (\ref{eq:gkbeta}).
\end{itemize}
For simulations [3.], $\Bmu ,\hatprecon_c $ (also known as the weight of
each consensus sample in CMC algorithm) are chosen, respectively, as the estimated mean and inverse of
variance of $\target{}{c}{\cdot}$, as suggested by \cite{Scott.2016}.

The density curve estimation results are summarized in Figure
\ref{figure:toybetaexample}. Note that all results are based on $10,000$ realisations.
The black solid cure (Simulation
[1.]) and the blue solid curve (Simulation [2.]) are almost
identical, since both of them are based on exact simulation methods. 
Again, Concensus Monte Carlo gives very biased results.

\begin{figure}[h]
\begin{center}
\includegraphics[scale=0.6]{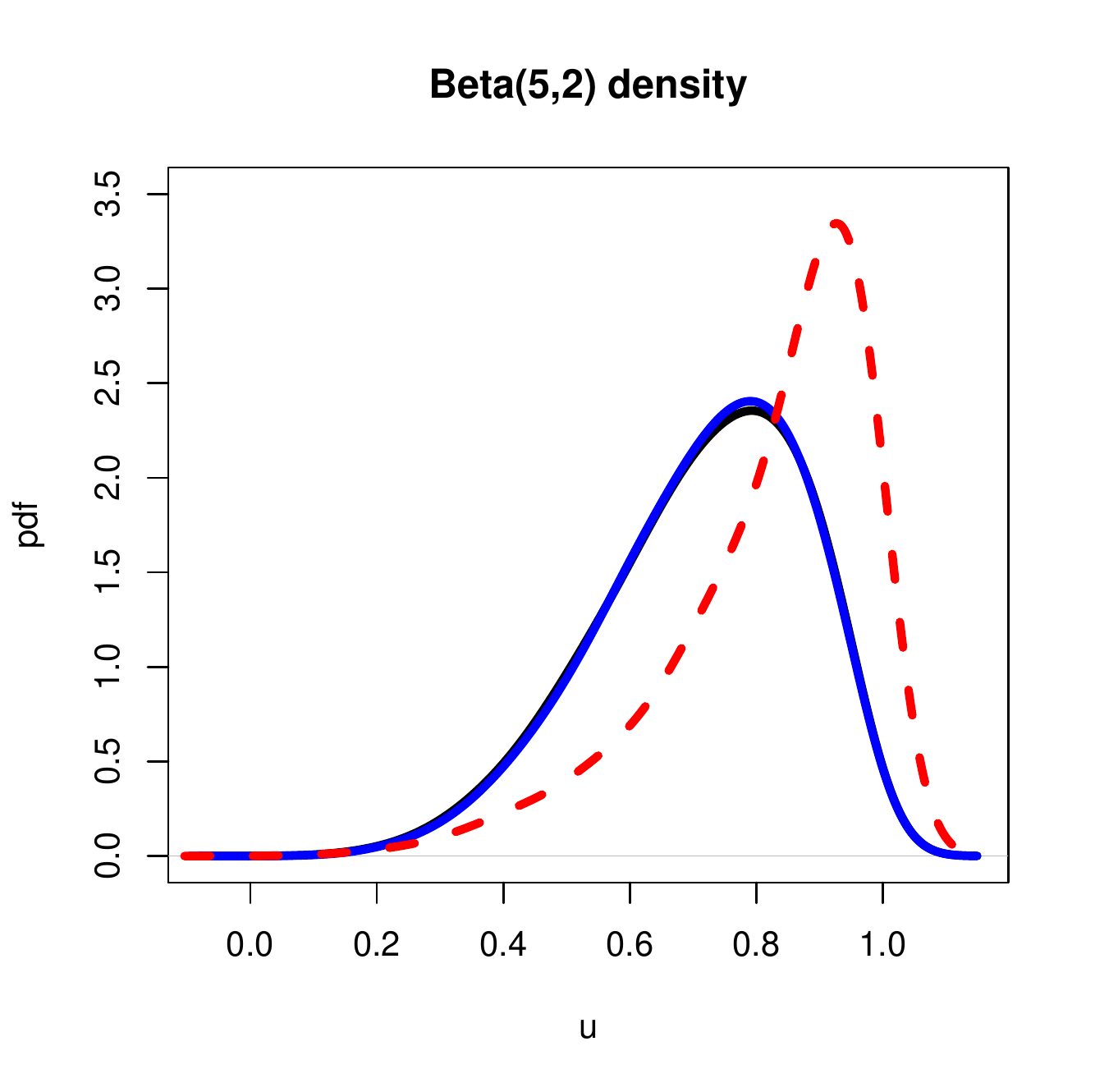}
$\vspace{-0.5cm}$
\caption{Kernel density fitting with bandwidth $0.04$ for Beta$(5,5)$, based on different Monte Carlo methods.
[1.]-- black solid curve, standard exact MC;
[2.]-- blue solid curve, Algorithm \ref{algorithm:1};
[3.]-- red dashed curve, CMC algorithm.
\label{figure:toybetaexample}}
\end{center}
\end{figure}

{\bf Remark 4. Tuning parameters and density decomposition}

We may choose any value $T$ in the proposed algorithms. However, as we mentioned before, value $T$ is a tunning parameter for the efficiency of Algorithm \ref{algorithm:1}, which indeed shown by our simulation results.
The CPU running time for simulating 10,000 realisations based on Algorithm \ref{algorithm:1} for the Beta-example is summarized in Figure \ref{figure:runtime}. The optimum value is about $T=2$. Note that such optimum value can be found in practice via a small amount of preliminary simulations.

\begin{figure}[h]
\begin{center}
\includegraphics[scale=0.5]{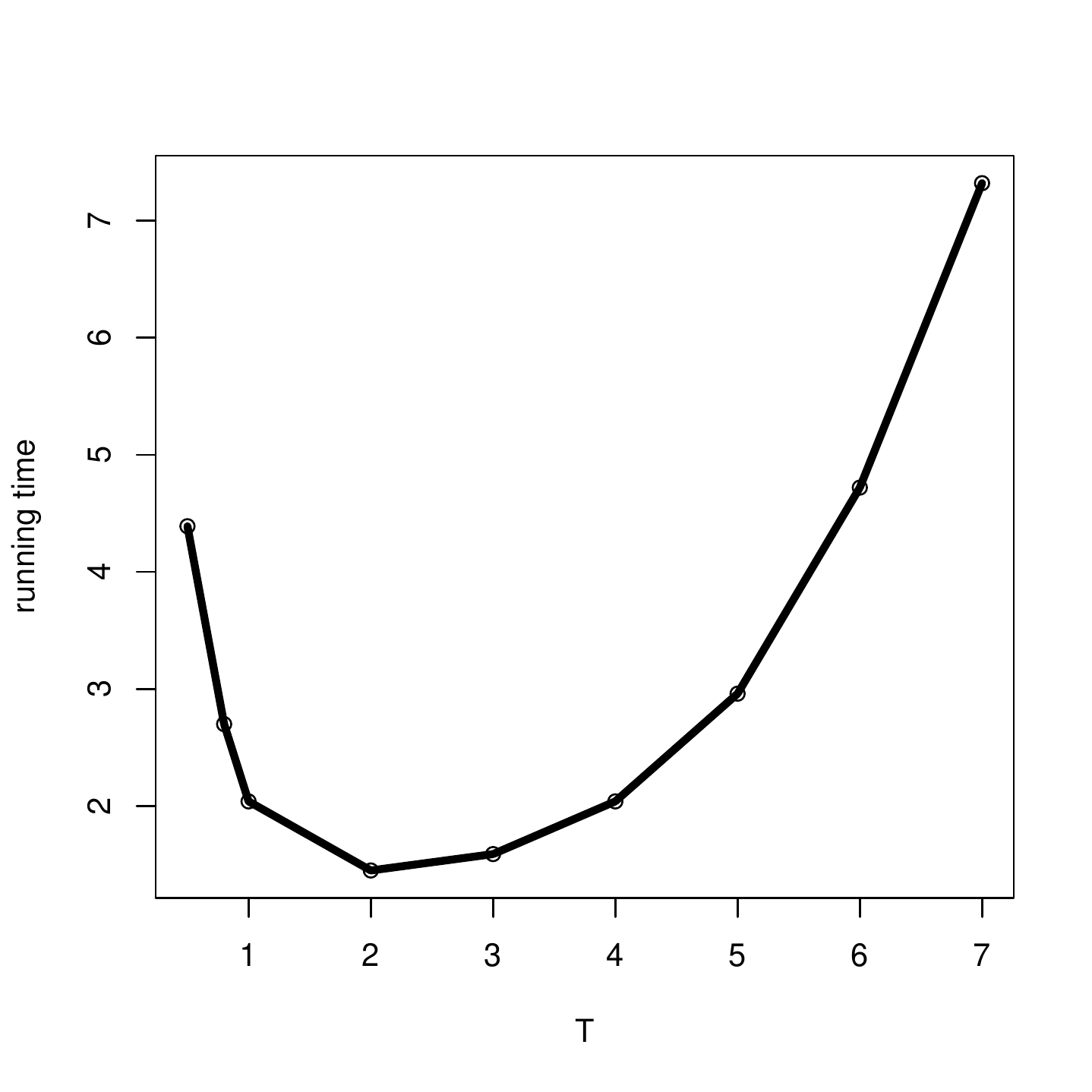}
$\vspace{-0.5cm}$
\caption{CPU time (in seconds) for  Algorithm \ref{algorithm:1}, based on different $T$.
\label{figure:runtime}}
\end{center}
\end{figure}

In addition, when we split the target $f$ into $f\propto f_1\cdots f_C$, we actually chose $f_1=\cdots =f_C = f^{1/C}$, since such decomposition will always give the smallest variation for $\x^{(c)}, c=1,\cdots,C$, as suggested in \cite{Dai.2017}.


\FloatBarrier
\section{Conclusion} \label{sec:conclusions}

In this paper we have introduced a novel theoretical framework, and direct methodological implementation (Monte Carlo Fusion), to address the common (but challenging) `fusion' problem -- unifying distributed analyses and inferences on shared parameters from multiple sources, into a single coherent inference -- by viewing it as a simple rejection sampler on an extended space (\secref{sec:extended}), and developing an appropriate sampling mechanism (\secref{sec:brownian_rejection}).

Our fusion approach is not only the first paper to answer in a principled manner how to combine samples from multiple sources, but (as shown in \secref{sec:ornstein_rejection}), also provides a principled approach to understand the errors that arise in other existing unification schemes (such as those in big data divide-and-conquer approaches). The errors in existing unification schemes can be considerable, even for simple one dimensional unification targets (such as those we consider in \secref{sec:examples}).

Characterising the error in existing unification schemes is possible by setting the (proposal) sampling mechanisms of those schemes within our framework, and finding a representation for the remaining error (which we could remove by further acceptance or rejection). This opens interesting avenues of research in which existing unification schemes are adapted within our framework into efficient proposal mechanisms for our extended fusion target density (\ref{eq:targ}).

A number of avenues to apply our work directly to interesting applications are possible. In addition to those discussed in \secref{sec:intro} (namely expert elicitation, multi-view learning and meta-analysis), other application areas include Bayesian group decision theory (see Remark \ref{remark:group}) and Bayesian sensitivity analysis. In the case of Bayesian sensitivity analysis we need to assess a large number of prior distributions, but existing methods address this by using approximations \citep{Tan.2015}.

A key avenue for (on-going) future research is to fully explore how to use the Monte Carlo Fusion framework we introduce to modify, and remove approximation, from existing Monte Carlo methods (particularly within the big data setting) that use the divide-and-conquer (fork-and-join) strategies described in \secref{sec:intro}. In the particular setting of big data the unification of distributed inferences is only part of the problem -- additional constraints are imposed due to practical computational and hardware concerns (such as avoiding as far as possible communication between computing cores). As such, key future research will focus on how to implement Monte Carlo Fusion with this particular set of constraints (as direct implementation is not possible).

Another interesting big data direction would be to blend the multi-core approach of divide-and-conquer strategies, with state-of-the-art single-core approaches for big data (for instance, \cite{Pollock.2016}). 

\section*{Acknowledgements}

The authors would like to thank the Isaac Newton Institute for Mathematical Sciences for support and hospitality during the programme ``Scalable inference; statistical, algorithmic, computational aspects (SIN)'' when work on this paper was undertaken. MP and GOR were supported by the EPSRC [grant number EP/K014463/1]. GOR was additionally supported by the EPSRC [grant numbers EP/K034154/1, EP/D002060/1].


\appendix


\section{Proof of Proposition \ref{proposition:OUmainresult}}

To prove the proposition, we first introduce a lemma about a density proportional to the following function
\begin{eqnarray}\label{eq:proposalou}
&& \tilde g^{ou}( \x^{(1)}, \ldots, \x^{(C)},\y ) \nonumber \\
&=& \prod\cores \left[ \target{}{c}{\x^{(c)}} \Trans{ou}{T,c}{\x^{(c)}}{\y} \exp\left( \A{ou}{c}{\x^{(c)}} - \A{ou}{c}{\y}  \right) \right]
\end{eqnarray}
where $ \Trans{ou}{T,c}{\x^{(c)}}{\y} $ is the transition density from $\x^{(c)}$ at time $0$ to $\y$ at time $T$, for the OU process given by (\ref{eq:Ax_alphax_OU}).

\medskip

\begin{lemma}
{\bf The expression of $\tilde g^{ou}$.} The formula in (\ref{eq:proposalou}) can be rewritten as
\begin{eqnarray}\label{eq:defhosimple_1}
&&\tilde g^{ou}\left({\x^{(1)}, \cdots, \x^{(C)}, \y}\right) \nonumber \\
&\propto & \left[\prod\cores \target{}{c}{\x^{(c)}} \right]
\text{\bf etr} \left[ - \frac{1}{2}  \left[ \yT - \widetilde{\x} \right]^{\otimes 2} \D \right]
\rho^{ou} (\x^{(1)}, \cdots, \x^{(C)})
\end{eqnarray}
where $\widetilde{\x}$ and $\rho^{ou}$ are given by (\ref{eq:tildex}) and (\ref{eq:rho}), respectively.
\hfill $\square$
\label{lemma:OUresult}
\end{lemma}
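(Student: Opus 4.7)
\textbf{Proof proposal for Lemma \ref{lemma:OUresult}.}

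The plan is a direct computation: substitute the explicit Gaussian form of the OU transition density and the quadratic form of $\A{ou}{c}{\cdot}$, then complete the square in $\y$. Because the OU process in (\ref{eq:Ax_alphax_OU}) is linear, its transition density is explicitly
\[
\Trans{ou}{T,c}{\x^{(c)}}{\y} \propto \text{\bf etr}\!\left[-\tfrac{1}{2}(\y-\mc)^{\otimes 2}\,\Vc^{-1}\right],
\]
with $\mc$ and $\Vc$ as given in (\ref{eq:equationD})--(\ref{eq:tildex}); and by (\ref{eq:Ax_OU}) we have $\exp(\A{ou}{c}{\x^{(c)}}-\A{ou}{c}{\y})=\exp\!\bigl[-\tfrac12(\Bmu-\x^{(c)})^{tr}\hatprecon_c(\Bmu-\x^{(c)})+\tfrac12(\Bmu-\y)^{tr}\hatprecon_c(\Bmu-\y)\bigr]$. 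Substituting these into (\ref{eq:proposalou}) isolates the $\prod_c f_c(\x^{(c)})$ prefactor and leaves everything else as an exponential of a quadratic form in $\y$.

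Next I would group the exponent inside the product over $c$ by degree in $\y$. The quadratic-in-$\y$ coefficient is $-\tfrac{1}{2}\sum_c \Vc^{-1} + \tfrac{1}{2}\sum_c\hatprecon_c = -\tfrac{1}{2}\D$ by the definition of $\D$ in (\ref{eq:equationD}). The linear-in-$\y$ coefficient is $\sum_c (\Vc^{-1}\mc - \hatprecon_c\Bmu) = \D\widetilde{\x}$ by the definition of $\widetilde{\x}$ in (\ref{eq:tildex}). Completing the square then gives
\[
-\tfrac{1}{2}\y^{tr}\D\y + \y^{tr}\D\widetilde{\x} = -\tfrac{1}{2}(\y-\widetilde{\x})^{tr}\D(\y-\widetilde{\x}) + \tfrac{1}{2}\widetilde{\x}^{tr}\D\widetilde{\x},
\]
producing exactly the $\text{\bf etr}\!\left[-\tfrac12(\y-\widetilde{\x})^{\otimes 2}\D\right]$ factor appearing in (\ref{eq:defhosimple_1}).

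The remaining task---and where the main work lies---is to show that the residual $\y$-free factor equals $\rho^{ou}(\x^{(1)},\ldots,\x^{(C)})$ as given in (\ref{eq:rho}). Collecting the constant-in-$\y$ terms gives
\[
R := \exp\!\left[\tfrac{1}{2}\widetilde{\x}^{tr}\D\widetilde{\x} - \tfrac{1}{2}\sum_c \mc^{tr}\Vc^{-1}\mc + \tfrac{1}{2}\sum_c \Bmu^{tr}\hatprecon_c\Bmu\right].
\]
I would then rewrite $\tfrac{1}{2}\widetilde{\x}^{tr}\D\widetilde{\x}$ by substituting $\widetilde{\x} = \D^{-1}\sum_c(\Vc^{-1}\mc - \hatprecon_c\Bmu)$, expand the square over the double sum in $c$, and regroup so that the matrices $\bM_{1,c}$, $\bM_{2,c}$, and $\bH$ in Proposition \ref{proposition:OUmainresult} appear. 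Specifically, the cross-term between $c$ and $c'$ in $\widetilde{\x}^{tr}\D\widetilde{\x}$ is where the sums $\sum_c \hatprecon_c$, $\sum_c \Vc^{-1}$ and products $\Vc^{-1}\D^{-1}$ emerge, giving rise to $\bM_{1,c}$ and $\bM_{2,c}$ after using the identity $e^{2\hatprecon_c T}\hatprecon_c = \Vc^{-1} - \hatprecon_c + \hatprecon_c e^{2\hatprecon_c T}$ that relates $\Vc^{-1}$ to the operator $e^{2\hatprecon_c T}\hatprecon_c$ (via the explicit form of $\Vc$); while the antisymmetric part of the double sum assembles into the bilinear $\bH\D^{-1}$ in $R$.

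The main obstacle is purely algebraic: carefully matching the expanded quadratic form $R$ with the matrix polynomial in (\ref{eq:rho}). The identities $\sum_c \Vc^{-1} = \D + \sum_c \hatprecon_c$, together with the OU-specific identity for $\Vc$ above, should reduce the matching to elementary rearrangement. Once $R = \rho^{ou}$ is verified, pulling this factor outside the $\y$-dependent Gaussian kernel yields exactly (\ref{eq:defhosimple_1}), completing the proof. No non-$\y$, non-$\x^{(c)}$ normalizing constants need be tracked beyond the proportionality sign, since $|\D|^{-1/2}$ and $\prod_c |\Vc|^{-1/2}$ are absorbed into the overall normalising constant of the density.
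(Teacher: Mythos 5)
Your overall strategy --- write the OU transition density and the $\A{ou}{c}{\cdot}$ factors as Gaussians in $\y$, complete the square, and identify the leftover $\y$-free factor with $\rho^{ou}$ --- is the natural one (the paper relegates this computation to its supplement), and your identification of the quadratic coefficient $-\tfrac12\D$ and the linear coefficient $\D\widetilde{\x}$ is correct. However, there is a genuine gap in the residual. When you collect the $\y$-free terms you drop the factor $\exp\bigl(\sum_c \A{ou}{c}{\x^{(c)}}\bigr)=\text{\bf etr}\bigl[-\tfrac12\sum_c(\Bmu-\x^{(c)})^{\otimes 2}\hatprecon_c\bigr]$ coming from $\exp(\A{ou}{c}{\x^{(c)}}-\A{ou}{c}{\y})$: it is constant in $\y$ but depends on $\x^{(c)}$, so it cannot be absorbed into the proportionality constant and must appear in $R$. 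It is in fact essential: rewriting it via $\x^{(c)}-\Bmu=e^{\hatprecon_c T}(\mc-\Bmu)$ is the sole source of the terms $e^{2\hatprecon_c T}\hatprecon_c$ appearing in $\bM_{1,c}$ and $\bM_{2,c}$ of (\ref{eq:rho}) --- every other ingredient ($\Vc^{-1}$, $\mc$, $\D$) remains bounded as $T\to\infty$ --- so with $R$ as you have written it the matching to $\rho^{ou}$ would fail.

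Two further points. The identity you invoke to produce $\bM_{1,c}$, namely $e^{2\hatprecon_c T}\hatprecon_c=\Vc^{-1}-\hatprecon_c+\hatprecon_c e^{2\hatprecon_c T}$, is false: it reduces to $\Vc^{-1}=\hatprecon_c$, which contradicts $\Vc^{-1}=2\hatprecon_c(\mathbf I_d-e^{-2\hatprecon_c T})^{-1}$. And the only non-routine content of the lemma --- verifying that the corrected residual really equals the specific expression $\text{\bf etr}\{-\tfrac12[\bH\D^{-1}+\sum\cores\bM_{1,c}(\mc+\bM_{1,c}^{-1}\bM_{2,c}\Vc\hatprecon_c\Bmu)^{\otimes 2}]\}$ --- is exactly the step you defer, so as it stands the proposal is a (mostly sound) plan rather than a proof. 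To repair it: reinstate the dropped factor, express everything in terms of $\mc-\Bmu$ and $\Vc^{-1}$, and carry the double-sum expansion of $\widetilde{\x}^{tr}\D\widetilde{\x}$ through to the end, separating the diagonal terms (which assemble into the $\bM_{1,c}$, $\bM_{2,c}$ sum) from the cross terms (which assemble into $\bH\D^{-1}$).
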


\begin{proof}
See the supplementary file.
\end{proof}

Now we prove Proposition \ref{proposition:OUmainresult}.
\begin{proof}
If we denote $\overline{\lange{}{}}$ as the law of $C$ $d$-dimensional Langevin diffusion bridges given in (\ref{eq:general_diffusion}), with starting points $\x^{(c)}$ and common ending point $\y$, the result of Proposition \ref{proposition:ratiobound} can be written as
\begin{equation}
\label{eq:factrej}
{g^{dl} (\x^{(1)}, \ldots, \x^{(C)}, \y ) \over  h^{bm}(\x^{(1)}, \ldots, \x^{(C)}, \y )} \times \ud \overline{\lange{}{}}(\vec{\x}) \propto \rho^{bm} \times E^{bm} \times \prod\cores e^{-T\Phifn{}{c}} \times \ud \overline{\mathbb W}(\vec{\x})
\end{equation}
where $\vec{\x} = \{ \x^{(c)}_t, c=1,\cdots, C, t\in[0,T] \}$ are typical diffusion bridge paths, with starting points $\x^{(c)}$ and common ending point $\y$.

If we consider a very special case where each $\target{}{c}{\x}$ is a Gaussian density $\exp(\A{ou}{c}{\x})$, we immediately have that the target $g^{dl}$ becomes
$$
g^{ou} = \prod\cores\left[ e^{2 \A{ou}{c}{\x^{(c)}}} \Trans{ou}{T,c}{\x^{(c)}}{\y} \frac{1}{e^{\A{ou}{c}{\y}} } \right]
$$

In addition, the proposal density $h^{bm}$ in Propostion \ref{proposition:ratiobound} becomes 
$$
 {\hbar}^{bm}(\x^{(1)}, \ldots, \x^{(C)}, \y )  = \prod\cores \left[ e^{\A{ou}{c}{\x^{(c)}}} \right]  e^{-\frac{\| \y - \bar \x\|^2}{2T}},
$$
and further the rejection sampling ratio in Proposition \ref{proposition:ratiobound} becomes
\begin{equation}
\label{eq:factrej_ou2}
{g^{ou} (\x^{(1)}, \ldots, \x^{(C)}, \y ) \over  \hbar^{bm}(\x^{(1)}, \ldots, \x^{(C)}, \y )}  \times \ud \overline{\ornst{}{}}(\vec{\x})  \propto \rho^{bm} \times E^{ou}_* \times \ud \overline{\mathbb W}(\vec{\x}) 
\end{equation}
\begin{equation}
E^{ou}_*:=
 \prod_{c=1}^C \left[
\exp\left\{  -   \int_0^T \phifn{ou}{c}{\x^{(c)}_t} 
 \ud t 
\right \}
\right]\ .
\end{equation}

Now we have,
\begin{eqnarray*}
{g^{dl} \over  h^{ou}} \cdot \frac{\ud \overline{\lange{}{}}}{\ud \overline{\ornst{}{}}}&=& {g^{dl} \over  h^{bm} }
\frac{\ud \overline{\lange{}{}}}{\ud \overline{\mathbb W}}  \cdot { \hbar^{bm} \over  g^{ou}} 
\frac{\ud \overline{\mathbb W}}{\ud \overline{\ornst{}{}}} \cdot {g^{ou} \over h^{ou}}\cdot { h^{bm} \over  \hbar^{bm}} \\
&\propto& \frac{E^{bm}}{E^{ou}_*} \cdot \frac{\prod\cores\left[ e^{2\A{ou}{c}{\x^{(c)}}} \Trans{ou}{T,c}{\x}{\y} \frac{1}{e^{\A{ou}{c}{\y}} } \right] }
{ \left[\prod\cores \target{}{c}{\x^{(c)}} \right]
\text{\bf etr} \left[ - \frac{1}{2}  \left[ \y - \widetilde{\x} \right]^{\otimes 2} \D \right]} \cdot \frac{\left[\prod\cores \target{}{c}{\x^{(c)}} \right]}{\left[\prod\cores e^{\A{ou}{c}{\x^{(c)}}}\right]} \\
&=& \frac{E^{bm}}{E^{ou}_*} \cdot \frac{\tilde g^{ou}(\x^{(1)}, \cdots, \x^{(C)}, \y)} { \left[\prod\cores \target{}{c}{\x^{(c)}} \right]
\text{\bf etr} \left[ - \frac{1}{2}  \left[ \y - \widetilde{\x} \right]^{\otimes 2} \D \right]}.
 \end{eqnarray*}

 Lemma \ref{lemma:OUresult} and expressions of $E^{bm}$ and $E_*^{ou}$ immediately gives
  \begin{eqnarray*}
{g^{dl} \over  h^{ou}}  &=& \frac{E^{bm}}{E^{ou}_*} \cdot \rho^{ou}  \propto  E^{ou} \cdot \rho^{ou}
    \end{eqnarray*}
where $E^{ou}$ given in (\ref{eq:Eou}).
 \end{proof}
 
  
\section{ Proof of Lemma \ref{lemma:relationwithScott}}

We consider the formula of $\rho^{ou}$ in (\ref{eq:rho}). If $T=\infty$, we have
\begin{eqnarray*}
\mc = \Bmu, \quad\quad \quad \Vc = \frac{\hatprecon_c^{-1}}{2}
\end{eqnarray*}
and 
\begin{eqnarray*}
\D &=& \sum\cores \hatprecon_c, \nonumber \\
\bM_{1,c} &=& e^{2\hatprecon_c T} \hatprecon_c - 2\hatprecon_c\\
\bM_{2,c} &=& 4\hatprecon_c - 2 e^{2\hatprecon_c T} \hatprecon_c = -2 \bM_{1,c} \\
\end{eqnarray*}
Therefore
\begin{eqnarray*}
 && \lim_{T\rightarrow \infty}\bM_{1,c} \left( \mc + \bM_{1,c}^{-1} \bM_{2,c} \Vc \hatprecon_c \Bmu \right)^{\otimes 2} \nonumber \\
&=&   \lim_{T\rightarrow \infty} \left( \bM_{1,c}^{1/2} \Bmu + \bM_{1,c}^{-1/2} \bM_{2,c} \frac{\Bmu}{2} \right)^{\otimes 2} \nonumber \\
&=& \mathbf 0
\end{eqnarray*}
and further $\rho^{ou}(\x^{(1)}, \cdots, \x^{(C)})$ becomes a value not depending on $\X{0}{1:C}$ at all, as $T\rightarrow \infty$. Therefore with $T=\infty$ the density function
$\tilde h^{ou}(\cdot)$ becomes
\begin{eqnarray*}
\tilde h^{ou}{\cdot}(\x^{(1)}, \cdots, \x^{(C)}, \y) &\propto & \left[\prod\cores \target{}{c}{\x^{(c)}} \right]
\text{\bf etr} \left[ - \frac{1}{2}  \left[ \y - \widetilde{\x} \right]^{\otimes 2} \left(\sum\cores \hatprecon_c \right) \right]
\end{eqnarray*}
with
\begin{eqnarray*}
\widetilde{\x} = \left(\sum\cores \hatprecon_c \right)^{-1} \left\{ \sum\cores \hatprecon_c \Bmu \right\}
\end{eqnarray*}
Then we can generate $\y$ as, with some standard Gaussian random errors $\boldsymbol \epsilon$,
\begin{eqnarray*}
\y &=&   \widetilde{\x}  + \left(\sum\cores \hatprecon_c \right)^{-1/2} \cdot \boldsymbol \epsilon\\
&=& \left(\sum\cores \hatprecon_c \right)^{-1} \left\{ \sum\cores \hatprecon_c \Bmu + \left(\sum\cores \hatprecon_c\right)^{1/2}  \boldsymbol \epsilon \right\} \\
&\stackrel{distr.}{=}{} & \left(\sum\cores \hatprecon_c \right)^{-1} \left\{ \sum\cores \hatprecon_c \Bmu + \sum\cores \hatprecon_c^{1/2} \boldsymbol \epsilon_c \right\}\\
&=& \left(\sum\cores \hatprecon_c \right)^{-1} \left\{ \sum\cores \hatprecon_c \left( \Bmu + \hatprecon_c^{-1/2} \boldsymbol \epsilon_c \right) \right\}
\end{eqnarray*}
where $\stackrel{distr.}{=}{} $ means `equal in distribution' and $\boldsymbol \epsilon_c, c=1,\cdots, C$ means $C$ independent standard normal vectors.

By noticing that $\Bmu + \hatprecon_c^{-1/2} \boldsymbol \epsilon_c$ has the same distribution as $\x^{(c)}$ if $\target{}{c}{\x}$ is a Gaussian distribution, the lemma is proved.



\end{document}